\renewcommand{\log}{\lg}
\newcommand{\Bleft}{B_\mathit{left}}
\newcommand{\Bright}{B_\mathit{right}}
\title{A Self-Index on Block Trees
\thanks{Funded in part by Fondecyt Grant 1-170048.}}
\author{Gonzalo Navarro}
\institute{Department of Computer Science, University of Chile, Beauchef 851, Santiago, Chile, gnavarro@dcc.uchile.cl}
\titlerunning{A Self-Index on Block Trees}
\authorrunning{G. Navarro}
\date{}
\begin{document}

\maketitle

\begin{abstract}
The Block Tree is a recently proposed data structure that reaches compression
close to Lempel-Ziv while supporting efficient direct access to text substrings.
In this paper we show how a self-index can be built on top of a Block Tree
so that it provides efficient pattern searches while using space proportional
to that of the original data structure. More precisely, if a Lempel-Ziv parse
cuts a text of length $n$ into $z$ non-overlapping phrases, then
our index uses $O(z\log(n/z))$ words and finds the $occ$ occurrences of a 
pattern of length $m$ in time $O(m\log n+occ\log^\epsilon n)$ for any
constant $\epsilon>0$.


\end{abstract}

\section{Introduction}

The Block Tree (BT) \cite{BGGKOPT15} is a novel data structure for representing
a sequence, which reaches a space close to its LZ77-compressed \cite{ZL77} 
space. Given a string $S[1..n]$ over alphabet $[1..\sigma]$, on which the LZ77 
parser produces $z$ phrases (and thus an LZ77 compressor uses $z\log n + 
O(z\log\sigma)$ bits, where $\log$ denotes the logarithm in base 2), 
the BT on $S$ uses $O(z \log(n/z)\log n)$ bits (also said to be $O(z\log(n/z))$
space). This is also the best asymptotic space obtained with grammar compressors
\cite{Ryt03,CLLPPSS05,Sak05,Jez15,Jez16}.
In exchange for using more space than LZ77 compression, the BT offers fast
extraction of substrings: a substring of length $\ell$ can be extracted in
time $O((1+\ell/\log_\sigma n)\log(n/z))$. In this paper we consider
the LZ77 variant where sources and phrases do not overlap, thus 
$z=\Omega(\log n)$.

Kreft and Navarro \cite{KN13} introduced a {\em self-index} based on LZ77 
compression, which proved to be extremely space-efficient on highly repetitive
text collections \cite{CFMN16}. A self-index on $S$ is a data structure that 
offers direct access to any substring of $S$ (and thus it replaces $S$), and 
at the same time offers indexed searches. Their self-index uses $3z\log n + 
O(z\log\sigma)+o(n)$ bits (that is, about 3 times the size of the compressed 
text) and finds all the $occ$ occurrences of a pattern of length $m$ in time 
$O(m^2 h + (m+occ)\log z)$, where $h \le z$ is the maximum number of 
times a symbol is successively copied along the LZ77 parsing. A 
string of length $\ell$ is extracted in $O(h\ell)$ time.

Experiments on repetitive text collections \cite{KN13,CFMN16} show that this 
LZ77-index is smaller than any other alternative and is competitive when 
searching
for patterns, especially on the short ones where the term $m^2 h$ is small and
$occ$ is large, so that the low time to report each occurrence dominates. On
longer patterns, however, the index is significantly slower. The term $h$ can
reach the hundreds on repetitive collections, and thus it poses a significant
penalty (and a poor worst-case bound).

In this paper we design the {\em BT-index}, a self-index that builds on top
of BTs instead of on LZ77 compression. Given a BT of $w = O(z\log(n/z))$ leaves
(which can be represented in $w\log n + O(w)$ bits), the BT-index uses 
$3w\log n + O(w)$ bits, and it searches for a pattern of length $m$ in time 
$O(m^2\lg(n/z)\lg\lg z + m\lg z\lg\lg z + occ(\lg(n/z)\lg\lg n+\lg z))$, which 
is in general a better theoretical bound than that of the LZ77-index. If
we allow the space to be any $O(w)=O(z\log(n/z))$ words, then the time can be 
reduced to $O(m^2\log(n/z)+m\log^\epsilon z + occ(\log\log n+\log^\epsilon z))$
for any constant $\epsilon>0$. In regular texts, the $O(\log(n/z))$
factor is around 3--4, and it raises to 8--10 on highly repetitive texts; 
both are much lower than the typical values of $h$. Thus we expect the BT-index
to be faster than the LZ77-index especially for longer patterns, where the
$O(m^2)$ factor dominates.

The self-indexes that build on grammar compression \cite{CNfi10,CNspire12} can
use the same asymptotic space of our BT-index, and their best search time is 
$O(m^2\log\log n + m\lg z+occ\lg z)$.
Belazzougui et al.~\cite{BGGKOPT15}, however, show that in practice BTs 
are faster to access $S$ than grammar-compressed representations,
and use about the same space if the text is highly repetitive. Thus we expect
that our self-index will be better in practice than those based on grammar
compression, again especially when the pattern is long and there are no
too many occurrences to report.

There are various other indexes in the literature using $O(z\log(n/z))$ bits
\cite{GGKNP14,BEGV17} or slightly more \cite{GGKNP12,NIIBT15,BEGV17} that
offer better time complexities. However, they have not been implemented as far
as we know, and it is difficult to predict how will they behave in practice.

\section{Block Trees}

Given a string $S[1..n]$ over an alphabet $[1..\sigma]$, whose LZ77 parse 
produces $z$ phrases, a Block Tree (BT) is defined as follows. 
At the top level, numbered $l=0$, we 
split $S$ into $z$ blocks of length $b_0=n/z$. Each block is then recursively
split into two, so that if $b_l$ is the length of the blocks at level $l$ it 
holds $b_{l+1}=b_l/2$, until reaching blocks of one symbol 
after $\lg (n/z)$ levels. At each level, every pair of consecutive blocks 
$S[i..j]$ that does not appear earlier as a substring of $S[1..i-1]$ is
{\em marked}. Blocks that are not marked are replaced by a pointer $ptr$
to their first occurrence in $S$ (which, by definition, must be a marked block
or overlap a pair of marked blocks). For every level $l \ge 0$, a bitvector 
$D_l$ with one bit per block sets to $1$ the positions of marked blocks. In
level $l+1$ we consider and subdivide only the blocks that were marked in level
$l$. In this paper, this subdivision is carried out up to the last level, where
the marked blocks store their corresponding symbol. 

We can regard the BT as a binary tree (with the first $\lg z$ levels
chopped out), where the internal nodes are the marked nodes and have two
children, and the leaves are the unmarked nodes. Thus we store one pointer 
$ptr$ per leaf. We also spend one bit per node in the bitvectors $D_l$.
If we call $w$ the number of unmarked blocks (leaves), then the BT has 
$w-z$ marked blocks (internal nodes), and it uses $w\lg n + O(w)$ bits.

To extract a single symbol $S[i]$, we see if $i$ is in a marked block at
level $0$, that is, if $D_0[\lceil i/b_0\rceil]=1$. If so, we map $i$ to a 
position in the next level, which only contains the marked blocks of this level:
$$i \leftarrow (rank_1(D_0,\lceil i/b_0\rceil) -1)\cdot b_0 + 
((i-1) \!\!\mod b_0)+1.$$ 
Function $rank_c(D,p)$ counts the number of occurrences
of bit $c$ in $D[1..p]$. A bitvector $D$ can be represented in $|D|+o(|D|)$
bits so that $rank_c$ can be computed in constant time \cite{Cla96}. Therefore,
if $i$ falls in a marked block, we translate the problem to the next level in
constant time. If, instead, $i$ is not in a marked block, we take the pointer
$ptr$ stored for that block, and replace $i \leftarrow i-ptr$, assuming $ptr$
stores the distance towards the first occurrence of the unmarked block. Now $i$
is again on a marked block, and we can move on to the next level as described.
The total time to extract a symbol is then $O(\log(n/z))$.

\section{A Self-Index}

Our self-index structure is made up of two main components: the first finds
all the pattern positions that cross block boundaries, whereas the second finds
the positions that are copied onto unmarked blocks. The main property that we
exploit is the following. We will say that a block is {\em explicit} in level 
$l$ if all the blocks containing it in lower levels are marked. Note that the 
explicit blocks in level $l$ are either marked or unmarked, and the descendants
of those unmarked are not explicit in higher levels.

\begin{lemma}
The occurrences of a given string $P$ of length at least 2 in $S$ either 
overlap two explicit blocks at some level, or are completely inside an unmarked
block at some level.
\end{lemma}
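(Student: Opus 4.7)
My plan is to prove this by a top-down descent through the levels of the Block Tree, maintaining the invariant that the occurrence of $P$ is contained inside a single explicit block.

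First I would handle level $0$ as a base case. Every block at level $0$ is explicit by vacuity, since no lower level exists. If the occurrence of $P$ spans two (or more) level-$0$ blocks, it already overlaps two adjacent explicit blocks at level $0$ and we are done. Otherwise, the occurrence lies within a single level-$0$ block $B_0$, which is explicit, and the invariant is established.

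Next I would iterate the following step. Suppose the occurrence is contained in a single explicit block $B$ at level $l$. If $B$ is unmarked, then the occurrence is completely inside an unmarked block and we are in the second case of the lemma. If $B$ is marked, then its two children $B_{\mathit{left}}$ and $B_{\mathit{right}}$ at level $l+1$ are both explicit (because all of their ancestors, namely the ancestors of $B$ together with $B$ itself, are marked). The occurrence of $P$ either (a) spans both children — in which case it overlaps two explicit blocks at level $l+1$ and we are done — or (b) lies entirely within one of the two children, in which case the invariant is preserved at level $l+1$ with that child playing the role of $B$.

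Finally I would argue termination. Block lengths halve at each level, so after at most $\log(n/z)$ iterations we reach level-$(\log(n/z))$ blocks of length $1$. Since $|P|\ge 2$, $P$ cannot be contained in a single block of length $1$, so the second case of the inner step must trigger strictly before the bottom is reached; equivalently, the first time $b_{l+1}<|P|$ forces subcase (a). Hence the descent halts in one of the two desired configurations.

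The only subtlety I anticipate is being careful about what ``explicit'' means for the two children of a marked block: by definition they are explicit because all blocks \emph{properly} containing them in lower levels are marked, and $B$ itself is marked by assumption. Once this is spelled out, the argument is a clean induction on the level, with the finite depth $\log(n/z)$ of the tree guaranteeing termination.
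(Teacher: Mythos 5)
Your proof is correct and follows essentially the same approach as the paper: an induction (or equivalently, an iterative descent) on the block level, observing at each step that an occurrence contained in a marked explicit block either splits across its two explicit children or passes entirely into one of them, with termination forced because a pattern of length at least $2$ cannot fit inside a length-$1$ block. You make a couple of points the paper leaves implicit (the exact role of $|P|\ge 2$ in the base case and why the children of a marked explicit block are themselves explicit), but the underlying argument is identical.
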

\begin{proof}
We proceed by induction on the BT block size. Consider the level $l=0$, where
all the blocks are explicit. If the occurrence overlaps two blocks or it is 
completely inside an unmarked block, we are done. If, instead, it is completely
inside a marked block, then this block is split into two blocks that are 
explicit in the next level. Consider that we concatenate all the explicit blocks
of the next level. Then we have a new sequence where the occurrence appears, and
we use a smaller block size, so by the inductive hypothesis, the property holds.
The base case is the leaf level, where the blocks are of length 1.
\qed
\end{proof}

We exploit the lemma in the following way. We will define an occurrence of $P$
as {\em primary} if it overlaps two consecutive blocks at some level. The
occurrences that are completely contained in an unmarked block are {\em
secondary} (this idea is a variant of the classical one used in all the
LZ-based indexes \cite{KU96}). Secondary occurrences are found by detecting
primary or other secondary occurrences within the area from where an unmarked
block is copied. We will use a data structure to find the primary occurrences 
and another to detect the copies.

\begin{lemma}
The described method correctly identifies all the occurrences of a string
$P$ in $S$.
\end{lemma}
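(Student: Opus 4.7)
The plan is to split the argument into two parts: soundness (every position the method reports is a genuine occurrence of $P$) and completeness (every genuine occurrence is reported). Lemma~1 will carry most of the weight on the completeness side, and the recursive definition of ``secondary'' will force us into an induction.

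For soundness I would simply invoke the semantics of the two components. By construction the first component only outputs positions that match $P$ and overlap two consecutive explicit blocks. The second component reports a position $o$ inside an unmarked block $B$ only after detecting an occurrence of $P$ inside the source region to which $B$ points; since $B$ is a verbatim copy of that source, $o$ is an occurrence of $P$ in $S$.

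For completeness I would invoke Lemma~1 to reduce to two cases. Primary occurrences (those overlapping two explicit blocks at some level) are directly reported by the first component. Secondary occurrences I would handle by induction on the starting position in $S$. Suppose $o$ is a secondary occurrence entirely contained in an unmarked block $B$ whose pointer $ptr$ addresses a source region starting strictly before $B$; then $P$ also occurs at the shifted position $o' = o - (\text{start}(B) - ptr)$, which lies inside the source region and satisfies $o' < o$. The inductive hypothesis gives that $o'$ has been reported, and the second component, which monitors each reported occurrence against the source regions of the unmarked blocks, then emits $o$. The base case is the leftmost occurrence of $P$ in $S$: it cannot be secondary, for otherwise the pointer of its containing unmarked block would yield a strictly earlier occurrence, contradicting minimality; hence it is primary and reported directly.

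The main obstacle I foresee is not the induction itself but the precise interface condition on the copy-detection component: it must, for \emph{every} occurrence that has already been reported, emit the shifted copy inside \emph{every} unmarked block whose source region contains that position, including the chained situation where the shifted copy is itself a source for yet another unmarked block. I would therefore state the correctness of the method relative to this completeness property of the second component, and leave the realisation of the property (together with its time and space analysis, and the avoidance of duplicate reports) to the subsequent sections of the paper.
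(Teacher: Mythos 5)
Your proof is correct, and it takes a genuinely different route from the paper's. The paper mirrors the proof of Lemma~1: it inducts on the block length, descending level by level, with the trichotomy "overlaps two explicit blocks / inside a marked block / inside an unmarked block," the last case being resolved by appealing to the occurrence at the source. You instead invoke Lemma~1 once to reduce to primary versus secondary, and then handle the secondary case by strong induction on the starting position in $S$, using the LZ-like invariant that an unmarked block's pointer addresses a strictly earlier position; the base case is that a leftmost occurrence must be primary. This is arguably the cleaner of the two arguments: the paper's phrase "will be detected as a copy of the occurrence already found in the source" quietly presupposes that the source occurrence is found, which itself needs the induction, whereas your position ordering gives an unambiguously well-founded recursion and does not rely on the level structure beyond the fact that copies point left. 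The trade-off is that your argument explicitly conditions on the completeness of the copy-detection component (emit $o$ whenever a containing source's target is reported, at every applicable level, including chained copies); the paper glosses over this interface but implicitly relies on it too, so this is an honest sharpening rather than a weakness. Both proofs are sound; yours is the more modular and is essentially the classical LZ-index correctness argument specialised to block trees.
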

\begin{proof}
We proceed again by induction on the block length. Consider level $l=0$. If
a given occurrence overlaps two explicit blocks at this level, then it is
primary and will be found. Otherwise, if it is inside a marked block at this
level, then it also appears at the next level and it will be found by the 
inductive hypothesis. Finally, if it is inside an unmarked block, then it
points to a marked block at the same level and will be detected as a copy of
the occurrence already found in the source. The base case is the last level,
where all the blocks are of length 1.
\qed
\end{proof}

\subsection{The Data Strucures}

We describe the data structures used by our index. Overall, they require 
$3w\lg n + O(w)$ bits, and replace the pointers $ptr$ used by the
original structure. We also retain the bitvectors $D_l$, which add up to
$O(w)$ bits.

\paragraph{Primary occurrences.}
Our structure to find the primary occurrences is a two-dimensional discrete 
grid $G$ storing points $(x,y)$ as follows. Let $B$ be a marked block at some
level $l$, which is divided into $B = \Bleft \cdot \Bright$ at level $l+1$.
Note that blocks $\Bleft$ and $\Bright$ can be marked or unmarked. Then we 
collect the reverse block $\Bleft^{rev}$ (i.e., $\Bleft$ read backwards)
in the multiset $Y$ and the block $\Bright$ in the multiset $X$. In addition,
for the blocks $B_1 \ldots B_z$ of level $l=0$, we also add $B_i^{rev}$ to $Y$
and the suffix $B_{i+1} \ldots B_z$ to $X$, for all $1 \le i < z$.

We then lexicographically sort $X$ and $Y$, to obtain the strings $X_1, X_2, 
\ldots$ and $Y_1, Y_2, \ldots$. The grid has a point at $(x,y)$ for each pair
$\langle X_x, Y_y\rangle$ that was added together in the previous process.

If a primary occurrence is not contained in any block, then it spans a sequence
$B_i \ldots B_j$ of blocks at level $l=0$. We will then find it as the 
concatenation of a suffix $B_i$ with a prefix of $B_{i+1} \ldots B_z$.
Otherwise, let $B$ be the smallest (or deepest) marked block that contains the
occurrence. Let $B$ be split into $\Bleft \cdot \Bright$ in the BT. Then the
occurrence will span a suffix of $\Bleft$ and a prefix of $\Bright$ (otherwise,
$B$ is not minimal or the occurrence is not primary). Therefore,
each primary occurrence will be found at exactly one pair $\langle X_x,Y_y
\rangle$.

The grid $G$ is of size $w \times w$, since there are $w-1$ pairs
$\langle X_x, Y_y \rangle$: one per internal BT node (of which there are $w-z$),
plus $z-1$ for the blocks of level 0.
We represent $G$ using a wavelet tree \cite{GGV03,GRR08,Nav14}, so that it takes
$w \lg w + o(w)$ bits and can report all the $y$-coordinates of the $p$
points lying inside any rectangle of the grid in time $O((p+1)\log w)$. We 
spend other $w\lg n$ bits in an array $T[1..w]$ that gives the position 
$j$ in $S$ corresponding to each point $(x,y)$, sorted by $y$-coordinate.

\paragraph{Secondary occurrences.}
Let $S_l[1..n_l]$ be the subsequence of $S$ formed by the explicit blocks at 
level $l$. If an unmarked block $B_i[1..b_l]$ at level $l$ points to its first
occurrence at $S_l[k..k+b_l-1]$, we say that $[k..k+b_l-1]$ is the {\em 
source} of $B_i$. 

For each level $l$ with $w_l$ unmarked blocks, we store two structures to find 
the secondary occurrences. The first is a bitvector $F_l[1..n_l+w_l]$ built as 
follows: We traverse from $S_l[1]$ to $S_l[n_l]$. For 
each $S_l[k]$, we add a $0$ to $F_l$, and then as many $1$s as sources start 
at position $k$. The second structure is a permutation $\pi_l$ on $[w_l]$ 
where $\pi_l(i)=j$ iff the source of the $i$th unmarked block of level $l$
is signaled by the $j$th $1$ in $F_l$.

Each bitvector $F_l$ can be represented in $w_l \lg(n_l/w_l) + O(w_l)$
bits so that operation $select_1(F_l,r)$ can be computed in constant time
\cite{OS07}. This operation finds the position of the $r$th 1 in $F_l$.
On the other hand, we represent $\pi_l$ using a structure \cite{MRRR12} that 
uses $w_l \lg w_l + O(w_l)$ bits and computes any $\pi_l(i)$ in constant time
and any $\pi_l^{-1}(j)$ in time $O(\log w_l)$. Added over all the levels, since
$\sum_l w_l = w$, these structures use $w\log n + O(w)$ bits. 


\subsection{Extraction} \label{sec:extract}

Let us describe how we extract a symbol $S[i]=S_0[i]$ using our representation.
We first compute the block $j\leftarrow \lceil i/b_0 \rceil$ where $i$ falls. 
If $D_0[j]=1$, we are already done on this level. If, instead, $D_0[j]=0$,
then the block $j$ is not marked. Its rank among the unmarked blocks of this
level is $r_0 = rank_0(D_0,j)$. The position of the $1$ in $F_0$ corresponding
to its source is $p_0 = select_1(F_0,\pi_0(r_0))$. This means that the source
of the block $j$ starts at $S_0[p_0 - \pi_0(r_0)]$. Since block $j$ starts at
position $s_0 = (j-1)\cdot b_0 + 1$, we set
$i \leftarrow (p_0-\pi_0(r_0)) + (i-s_0)$ and recompute 
$j\leftarrow \lceil i/b_0 \rceil$, 
knowing that the new symbol $S_0[i]$ is the same as the original one.

\begin{algorithm}[t]

\Fn{Extract$(i)$}{
 $l \leftarrow 0$ \\
 $b \leftarrow n/z$ \\
 \While {$b > 1$}
    { $j \leftarrow \lceil i/b \rceil$ \\
      \If {$D_l[j] = 0$}
	  { $r \leftarrow rank_0(D_l,j)$ \\
	    $p \leftarrow select_1(F_l,\pi_l(r))$ \\
	    $s \leftarrow (j-1)\cdot b + 1$ \\
	    $i \leftarrow (p-\pi_l(r)) + (i-s)$ \\
	    $j \leftarrow \lceil i/b \rceil$ \\
	  }
      $i \leftarrow (rank_1(D_l,j)-1)\cdot b + ((i-1) \!\!\mod b) + 1$ \\
      $l \leftarrow l+1$ \\
      $b \leftarrow b/2$
    }
  Return the symbol stored at position $i$ in the last level
}
\caption{Extracting symbols from our encoded BT.}
\label{alg:extract}
\end{algorithm}

Now that $i$ is inside a marked block $j$, we move to the next level. To
compute the position of $i$ in the next level, we do
$i \leftarrow (rank_1(D_0,j)-1)\cdot b_0 + ((i-1) \!\!\mod b_0) + 1$,
and continue in the same way to extract $S_1[i]$. In the last level we find 
the
symbol stored explicitly. The total time to extract a symbol is $O(\log(n/z))$.

Algorithm~\ref{alg:extract} gives the pseudocode.

\subsection{Queries}

\paragraph{Primary occurrences.}
To search for a pattern $P[1..m]$, we first find its primary occurrences using
$G$ as follows. For each partition $P_< = P[1..k]$ and $P_> = P[k+1..m]$, for
$1 \le k < m$, we binary search $Y$ for $P_<^{rev}$ and $X$ for $P_>$. To
compare $P_<^{rev}$ with a string $Y_i$, since $Y_i$ is not stored, we extract
the consecutive symbols of $S[T[i]-1]$, $S[T[i]-2]$, and so on, until the
lexicographic comparison can be decided. Thus each comparison requires
$O(m \log(n/z))$ time. To compare $P_>$ with a string $X_i$, since $X_i$ is
also not stored, we extract the only point of the range $[i,i] \times [1,w]$
(or, in terms of the wavelet tree, we extract the $y$-coordinate of the $i$th
element in the root sequence),
in time $O(\log w)$. This yields the point $Y_j$. Then we compare $P_>$ with
the successive symbols of $S[T[j]]$, $S[T[j]+1]$, and so on. Such a comparison
then costs $O(\log w + m\log(n/z))$. The $m$ binary searches require $m\log w$
binary search steps, for a total cost of $O(m^2 \log w \log (n/z) + m\log^2 w)$.

Note that the length of the strings to compare can be obtained implicitly from
$T[i]$ (or, equivalently, $T[j]$). If $T[i]-1$ (or $T[j]-1$) is a multiple of 
$(n/z)/2^l$ but not of $(n/z)/2^{l+1}$, then the string is a block of level $l$
and its length is $(n/z)/2^l$ (except if $l=0$, in which case $Y_j$ is the
full suffix $S[T[j]..n]$). This is easily found in constant time using
arithmetic operations.

Each couple of binary searches identifies ranges $[x_1,x_2] \times [y_1,y_2]$,
inside which we extract every point. The $m$ range searches cost $O(m\log w)$ 
time. Further, each point $(x,y)$ extracted costs $O(\log w)$ and it identifies
a primary occurrence at $S[T[y]-k..T[y]-k+m-1]$. Therefore the total cost with
$occ_p$ primary occurrences is $O(m^2 \log w \log (n/z) + m\log^2 w + 
occ_p\log w)$.

Algoritm~\ref{alg:search} gives the general search procedure, using procedure
{\em Primary} to report the primary occurrences and all their associated
secondary ones.

\begin{algorithm}[t]

\Fn{Search$(P,m)$}{
  \If {$m=1$}
      { $m \leftarrow 2$ \\
	$P = P[1]*$
      }
  \For {$k=1$ \KwTo $m-1$}
       { $[x_1,x_2] \leftarrow$ binary search for $P[k+1..m]$ in $X_1,\ldots,X_w$ $~~~~~~~~~~~~~~~~~$ $~~~~~~~~~~~~~~$ (or $[1,w]$ if $P[k+1..m]=*$) \\
         $[y_1,y_2] \leftarrow$ binary search for $P[1..k]^{rev}$ in $Y_1,\ldots,Y_w$ \\
	 \For {$(x,y) \in G \cap [x_1,x_2] \times [y_1,y_2]$}
	     { \textit{Primary}$(T[y]-k,m)$
	     }
       }
}
\caption{General search procedure.}
\label{alg:search}
\end{algorithm}

Patterns $P$ of length $m=1$ can be handled as $P[1]*$, where $*$ stands for
any character. Thus we take $[x_1,x_2]=[1,w]$ and carry out the search as a 
normal pattern of length $m=2$. To make this work also for the last position 
in $S$, we assume as usual that $S$ is terminated by a special character \$.

To speed up the binary searches, we can sample one out of $\log w$ strings
from $Y$ and insert them into a Patricia tree \cite{Mor68}, which would use
$O(w)$ extra space. The up to $\sigma$ children in each node are stored in
perfect hash functions, so that in $O(m)$ time we can find the Patricia tree
node $v$ representing the pattern prefix or suffix sought. Then the range
$[y_1,y_2]$ includes all the sampled leaves descending from $v$, and up to
$\log w$ strings preceding and following the range. The search is then 
completed with binary searches in $O(\log\log w)$ steps. In case the pattern
prefix or suffix is not found in the Patricia tree, we end up in a node $v$
that does not have the desired child and we have to find the consecutive pair
of children $v_1$ and $v_2$ that surround the nonexistent child. A predecessor
search structure per node finds these children in time $O(\log\log\sigma) =
O(\log\log z) = O(\log\log w)$. Then we finish with a binary search between 
the rightmost leaf of $v_1$ and the leftmost leaf of $v_2$, also in
$O(\lg\lg w)$ steps. Each binary search step takes $O(m\lg(n/z))$ time to
read the desired substring from $S$. At the end of the Patricia search, we 
must also read one string and verify that the range is correct, but this cost 
is absorbed in the binary searches.
Overall, the search for each cut of the pattern costs $O(m\lg(n/z)\lg\lg w)$.
We proceed similarly with $X$, where there is an additional cost of 
$O(\lg w\lg\lg w)$ to find the position where to extract each string from. 
The total cost over all 
the $m-1$ searches is then $O(m(m\lg(n/z)+\lg w)\lg\lg w)$.

\paragraph{Secondary occurrences.}
Let $S[i..i+m-1]$ be a primary occurrence. This is already a range $[i_0..i_0+
m-1] = [i..i+m-1]$ at level $l=0$. We track the range down to positions 
$[i_l..i_l+m-1]$ at all the levels $l>0$, using the position tracking mechanism
described in Section~\ref{sec:extract} for the case of marked nodes: 
$$i_{l+1} = 
(rank_1(D_l,\lceil i_l/b_l\rceil)-1)\cdot b_l + ((i_l-1) \!\!\mod b_l) + 1.$$
Note that we only need to consider levels $l$ where the block length is 
$b_l \ge m$, as with shorter blocks there cannot be secondary occurrences. So 
we only consider the levels $l=0$ to $l=\lg(n/z)-\lg m$. Further, we should
ensure that the block or the two blocks where $[i_l..i_l+m-1]$ lies are marked
before projecting the range to the next level, that is,
$D_l[\lceil i_l / b_l \rceil] = D_l[\lceil (i_l+m-1) / b_l \rceil] = 1$.
Still, note that we can ignore this test, because there cannot be sources
spanning concatenated blocks that were not contiguous in the previous levels.

For each valid range $[i_l..i_l+m-1]$, we determine the sources that contain
the range, as their target will contain a secondary occurrence. Those sources
must start between positions $k = i_l+m-b_l$ and $k' = i_l$.
We find the positions $p=select_0(F_l,k)$ and $p'=select_0(F_l,k'+1)$,
thus the blocks of interest are $\pi_l^{-1}(t)$, from $t=p-k+1$ to
$t=p'-k'-1$. Since $F_l$ is represented as a sparse bitvector \cite{OS07}, 
operation $select_0$ is solved with binary search on $select_1$, in time
$O(\log w_l) = O(\log w)$. This can be accelerated to $O(\log\log n_l)$ by
sampling one out of $\log n_l$ 1s in $F_l$, building a predecessor structure
on the samples, and then completing the binary search within two samples.
The extra space of the predecessor structures adds up to $O(w)$ bits.

To report the occurrence inside each such block $q=\pi_l^{-1}(t)$, we first
find its position in the corresponding unmarked block in its level. The block
starts at $S_l[(select_0(D_l,q)-1)\cdot b_l + 1]$, and the offset of the 
occurrence inside the block is $i_l - (select_1(F_l,t)-t)$ (operation 
$select_c$ on $D_l$ is answered in constant time using $o(|D_l|)$ further
bits \cite{Cla96}). Therefore, the
copied occurrence is at $S_l[i'_l..i'_l+m-1]$, where
$$i'_l = ((select_0(D_l,q)-1)\cdot b_l + 1) + (i_l - (select_1(F_l,t)-t)).$$
We then project the position $i_l'$ upwards until reaching the
level $l=0$, where the positions correspond to those in $S$. To project 
$S_l[i_l']$ to $S_{l-1}$, we compute the block number 
$j=\lceil i_l'/b_{l-1}\rceil$, and set 
$$i_{l-1}' \leftarrow 
(select_1(D_{l-1},j)-1)\cdot b_{l-1} + ((i_l'-1) \!\!\mod b_{l-1})+1.$$

Each new secondary occurrence we report at $S[i..i+m-1]$ must be also processed
to find further secondary occurrences at unmarked blocks copying it at any 
level. This can be done during the upward tracking to find its position in $S$,
as we traverse all the relevant ranges $[i_l'..i_l'+m-1]$.

Algorithm~\ref{alg:report} describes the procedure to report the primary
occurrence $S[i..i+m-1]$ and all its associated secondary occurrences.

\begin{algorithm}[t]

\Fn{Primary$(i,m)$}{
  $l \leftarrow 0$ \\
  $b \leftarrow n/z$ \\
  \While {$b/2 \ge m$ \KwAnd $D_l[\lceil i/b\rceil]=D_l[\lceil (i+m-1)/b\rceil]=1$}
     { 
       $i \leftarrow (rank_1(D_l,\lceil i/b\rceil)-1)\cdot b + ((i-1) \!\!\mod b) + 1$ \\
       $l \leftarrow l+1$ \\
       $b \leftarrow b/2$ \\
     }
  \textit{Secondary}$(l,i,m)$
}

\Fn{Secondary$(l,i,m)$}{
 $b \leftarrow (n/z)/2^l$ \\
 \While {$l \ge 0$}
    { $k \leftarrow i + m - b$ \\
      $k' \leftarrow i$ \\
      $p \leftarrow select_0(F_l,k)$ \\
      $p' \leftarrow select_0(F_l,k')$ \\
      \For {$t \leftarrow p-k+1$ \KwTo $p'-k'-1$}
	   { $q \leftarrow \pi_l^{-1}(t)$ \\
             $i' \leftarrow ((select_0(D_l,q)-1)\cdot b + 1) + (i - (select_1(F_l,t)-t))$ \\
	     \textit{Secondary}$(l,i',m)$ \\
	   } 
      $b \leftarrow 2\cdot b$ \\
      $l \leftarrow l-1$ \\
      \If {$l \ge 0$} 
          { $j \leftarrow \lceil i/b \rceil$ \\
            $i \leftarrow (select_1(D_l,j)-1)\cdot b + ((i-1) \!\!\mod b)+1$ 
	  }
    }
  Report occurrence at position $i$ 
}
\caption{Reporting primary and secondary occurrences.}
\label{alg:report}
\end{algorithm}

Considering the time to compute $\pi_l^{-1}$ at its source, the upward
tracking to find its position in $S$, and the tests to find further secondary
occurrences at each level of the upward tracking, each secondary occurrence is 
reported in time $O(\log(n/z)\log\log n)$. Each primary occurrence, in 
turn, is obtained in time $O(\log w)$ and then we spend $O(\log(n/z)\log\log n)$
time to track it down to all the levels to find possible secondary occurrences.
Therefore, the $occ$ primary and secondary occurrences are reported in time
$O(occ(\log(n/z)\log\log n + \log w))$.

\paragraph{Total query cost.}
As described, the total query cost to report the $occ$ occurrences is 
$O(m^2 \log (n/z)\log\log w + m\log w\lg\lg w + occ (\log(n/z)\log\log n+\log w))$. 
Since $w =O(z\lg(n/z))$ and $z=\Omega(\log n)$, it holds $\lg w=\Theta(\lg z)$.
A simplified formula is $O(m^2\lg n\lg\lg z+occ\lg n\lg\lg n)$. 
The space is $3w\log n + O(w)$ bits.

\begin{theorem}
Given a string $S[1..n]$ that can be parsed into $z$ non-overlapping Lempel-Ziv
phrases and represented with a BT of $w=O(z\log(n/z))$ pointers, there 
exists a data structure using $3w\log n + O(w)$ bits that so that any substring
of length $\ell$ can be extracted in time $O(\ell \log(n/z))$ and the $occ$ 
occurrences of a pattern $P[1..m]$ can be obtained in time 
$O(m^2\lg(n/z)\lg\lg z + m\log z\lg\lg z + occ(\lg(n/z)\lg\lg n+\lg z))$. 
This can be written as $O(m^2\lg n\lg\lg z+occ\lg n\lg\lg n)$.
\end{theorem}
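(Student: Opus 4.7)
The theorem is essentially a compilation of the bounds derived across Sections~3.1--3.3, so the plan is to assemble the space and time accountings rather than to prove a new claim. First, I would add up the bit-space contributions component by component: the wavelet tree on the $w \times w$ grid $G$ contributes $w\lg w + o(w)$ bits; the array $T[1..w]$ contributes $w\lg n$ bits; over all levels, the bitvectors $F_l$ in sparse representation contribute $\sum_l w_l\lg(n_l/w_l)+O(w_l) = O(w\lg n)$ bits, and the permutations $\pi_l$ contribute $\sum_l w_l\lg w_l + O(w_l) = w\lg w + O(w)$ bits; the retained bitvectors $D_l$ and the auxiliary predecessor samples for $F_l$ and the Patricia trees add $O(w)$ extra bits. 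Since $\lg w = \Theta(\lg z) = O(\lg n)$, these sum to $3w\lg n + O(w)$, matching the stated space.

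Second, for extraction I would invoke the per-symbol bound of $O(\log(n/z))$ from Section~3.2. Extracting a substring $S[i..i+\ell-1]$ then costs $O(\ell\log(n/z))$ by extracting each symbol independently; there is nothing subtle here since the algorithm in Algorithm~\ref{alg:extract} runs in $O(\log(n/z))$ time per position.

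Third, for pattern search I would combine the two cost components derived in Section~3.3. For primary occurrences, the Patricia-tree-accelerated binary search in $X$ and $Y$, performed over the $m-1$ cuts of $P$, yields $O(m(m\lg(n/z)+\lg w)\lg\lg w)$, to which I add $O(\lg w)$ per point extracted from the grid for each primary occurrence. For secondary occurrences, the upward/downward tracking plus the $select_0$ binary searches on $F_l$ accelerated by predecessor samples yields $O(\log(n/z)\log\log n)$ per secondary occurrence and $O(\log(n/z)\log\log n + \log w)$ per primary occurrence when we also scan its downward projections for sources. Summing, and using $\lg w = \Theta(\lg z)$, I obtain the stated bound
\[
O(m^2\lg(n/z)\lg\lg z + m\lg z\lg\lg z + occ(\lg(n/z)\lg\lg n + \lg z)).
\]
The simplified form $O(m^2\lg n\lg\lg z + occ\lg n\lg\lg n)$ follows from $\lg(n/z)\le\lg n$, $\lg z\le\lg n$, and $\lg\lg z\le\lg\lg n$.

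The main obstacle is simply bookkeeping: ensuring that the per-level space bounds really telescope to $O(w\lg n)$ despite the sparse-bitvector form of $F_l$, and that the $O(\lg w) = O(\lg z)$ rewriting is justified (which follows from $w = O(z\lg(n/z))$ together with the standing assumption $z = \Omega(\log n)$ stated in the introduction). Correctness of the search is not re-proved here because it is already established by the two earlier lemmas: every occurrence is either primary (found via $G$) or secondary (found via the $F_l$-$\pi_l$ mechanism from some occurrence already reported), so the algorithm reports each occurrence exactly once.
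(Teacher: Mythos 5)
Your overall plan---assemble the component costs from Sections~3.1--3.3---is the right one, and your treatment of the extraction time, the query-time decomposition, the substitution $\lg w = \Theta(\lg z)$ (justified by $w=O(z\lg(n/z))$ and $z=\Omega(\lg n)$), and the appeal to the two earlier lemmas for correctness all agree with the paper. However, the space accounting has a genuine gap that prevents you from establishing the stated constant $3w\lg n + O(w)$.

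You bound the bitvectors $F_l$ by $\sum_l \bigl(w_l\lg(n_l/w_l)+O(w_l)\bigr) = O(w\lg n)$ and, separately, the permutations $\pi_l$ by $\sum_l \bigl(w_l\lg w_l + O(w_l)\bigr) \le w\lg w + O(w)$. Adding these to the grid ($w\lg w + o(w)$) and the array $T$ ($w\lg n$), even with the sharpest reading of your $F_l$ bound you get $2w\lg n + 2w\lg w + O(w)$, which after $\lg w \le \lg n$ gives only $4w\lg n + O(w)$, not the claimed $3w\lg n + O(w)$. The ingredient you are missing is that the $F_l$ and $\pi_l$ contributions must be combined level by level: $w_l\lg(n_l/w_l) + w_l\lg w_l = w_l\lg n_l \le w_l\lg n$, so summing over $l$ with $\sum_l w_l = w$ gives $w\lg n + O(w)$ for the two structures \emph{together}. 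This is exactly the observation the paper invokes when it says ``Added over all the levels, since $\sum_l w_l = w$, these structures use $w\log n + O(w)$ bits.'' With it the total is $w\lg w + w\lg n + w\lg n + O(w) = 2w\lg n + w\lg w + O(w) \le 3w\lg n + O(w)$, which is also why the paper can state the finer form $2w\log n + w\log w + O(w)$ immediately after the theorem. Without the combined identity, your accounting does not close.
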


If we are interested in a finer space result, we can see that
the space is actually $2w\log n + w\log w + O(w)$ bits. This can be reduced 
to $w\log n + 2w\log w + O(w)$ by storing the array $T[1..w]$ in
$w\lg w + O(w)$ bits as follows. We have 
that each such position is either the start of a block at level $l=0$ or
the middle of a marked block. If we store the bitvectors $D_0$ to
$D_{\log(n/z)}$ concatenated into $D=1^z D_0 \cdots D_{\log(n/z)}$, then the
first $z$ 1s represent the blocks at level $l=0$ and the other 1s represent the
marked blocks of each level. We can therefore store $T[k]=p$ to refer to the
$p$th 1 in $D$, so that $T$ uses $w\log w$ bits. From the position
$select_1(D,p)$ in $D$, we can determine in constant time if it is among
the first $z$, which corresponds to a level-0 block, or that it corresponds to
some $D_l[i]$ (by using $rank$ on another bitvector of $O(w)$ bits that marks
the $\lg(n/z)$
starting positions of the bitvectors $D_l$ in $D$, or with a small fusion tree
storing those positions). If $T[k]$ points to $D_l[i]$, we know that the
suffix starts at $S_l[i_l]$, for $i_l = (i-1/2)\cdot b_l+1$. We then
project this position up to $S$. Thus we obtain any position of $T$ in time 
$O(\log(n/z))$, which does not affect the complexities.

\section{Using Linear Space}

If we do not care about the constant multiplying the space, we can have a
BT-index using $O(w\log n)$ bits and speed up searches considerably. First, 
we can obtain the $m-1$ ranges in the sets $X$ and $Y$, corresponding to
prefixes/suffixes of $P$, in overall time $O(m\log(mn/z))$, by using the 
fingerprinting technique described by Gagie et al.~\cite[Lem.~9]{GNP17}.
The lemma states that, if one can extract a substring of length $\ell$ from $S$
in time $f_e(\ell)$ and can compute a {\em fingerprint} of it in time 
$f_h(\ell)$, then one can obtain the lexicographic ranges in $X$ of the $m-1$ 
suffixes of $P$ in time $O(m\log(\sigma)/\omega + m(f_h(m)+\log m)+f_e(m))$, 
where $\omega = \Omega(\log n)$ is the number of bits in the computer word.
Similarly, we obtain the ranges in $Y$ of the suffixes of the reversed pattern.
They also show \cite[Thm.~2]{GNP17} how to extract any substring 
$S[i..i+\ell-1]$ in time $O(h + m\log(\sigma)/\omega)$ using a 
structure analogous to block trees of height $h$. These can be translated
verbatim to block-trees using $O(w\log n)$ bits that extract substrings
in time $O(\log(n/z)+m\log(\sigma)/\omega)$. Finally, they show
\cite[Lem.~7]{GNP17} how to compute a fingerprint in time $O(h)$ with the 
same structure analogous to a block tree of height $h$, so we can similarly
extend the block tree to use $O(w)$ words and compute the fingerprint in time
$O(\log(n/z))$.

Second, we can use faster two-dimensional range search data structures that 
still require linear space \cite{CLP11} to report the $p$ points in time 
$O((p+1)\log^\epsilon w)$ for any constant $\epsilon>0$ \cite{CLP11}. This 
reduces the cost per primary occurrence to $O(\log(n/z)\log\log n +
\log^\epsilon w)$. 

Finally, we can replace the predecessor searches that implement $select_0$
on the bitvectors $F_l$ by a completely different mechanism. Note that all
those searches we perform in our upward or downward path refer to the same
occurrence position $S[i..i+m-1]$, because we do not find unmarked blocks in 
the path. Thus, instead of looking for sources covering the occurrence at every
step in the path, we
use a single structure where all the sources from all the levels $l$ are mapped
to $S$. Such sources $[j..j+b_l-1]$ are sorted by their starting positions $j$ 
in an array $R[1..w]$. We create a range maximum query data structure 
\cite{FH11} on $R$, able to find in constant time the maximum endpoint 
$j+b_l-1$ of the blocks in any range of $R$. A predecessor search structure on
the $j$ values gives us the rightmost position $R[r]$ where the blocks start at
$i$ or to its left. A range maximum query on $R[1..r]$ then finds the block 
$R[k]$ with the rightmost endpoint in $R[1..r]$. If even $R[k]$ does not cover 
the position $j+b_l-1$, then no source covers the occurrence. If it does, we 
process it as a secondary occurrence and recurse on the ranges $R[1..k-1]$ and 
$R[k+1..r]$. It is easy to see that each valid secondary occurrence is 
identified in $O(1)$ time. 

Note that, if we store the starting position 
$j'$ of the target of source $[j..j+b_l-1]$, then we directly have the position
of the secondary occurrence in $S$, $S[i'..i'+m-1]$ with $i' = j'+(i-j)$. Thus
we do not even need to traverse paths upwards or downwards, since the primary 
occurrences already give us positions in $S$. The support for inverse
permutations $\pi_l^{-1}$ becomes unnecessary. Then the cost per secondary 
occurrence is reduced to a predecessor search.
A similar procedure is described for the LZ77-index \cite{KN13}.

The total time then becomes $O(m\lg(mn/z) + m\lg^\epsilon z + occ(\lg\lg(n/z) +
\lg^\epsilon z)) = O(m\log n + occ(\lg\lg n + \lg^\epsilon z))$.

\begin{theorem}
A string $S[1..n]$ where the LZ77 parse produces $z$ non-overlapping phrases 
can be represented in $O(z\log(n/z))$ space so that any substring of length 
$\ell$ can be extracted in time $O(\log(n/z)+\ell/\log_\sigma n)$ and the 
$occ$ occurrences of a pattern $P[1..m]$ can be obtained in time 
$O(m\log n + occ(\lg\lg n+\log^\epsilon z))$, for any constant $\epsilon>0$.
This can be written as $O(m\lg n + occ\lg^\epsilon n)$.
\end{theorem}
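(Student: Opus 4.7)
The plan is to upgrade each of the three cost-bearing components of Theorem~1 to the faster linear-space variants hinted at in the preceding text, and then sum the contributions; the resulting index still fits in $O(w)=O(z\log(n/z))$ words since every replacement is linear in $w$. First I would swap the block tree itself for the variant of \cite{GNP17}, which in $O(w)$ words supports extraction of any substring of length $\ell$ in time $O(\log(n/z)+\ell/\log_\sigma n)$ and fingerprint computation of any substring in time $O(\log(n/z))$. Feeding these $f_e$ and $f_h$ into Lemma~9 of \cite{GNP17} then yields all $m-1$ lexicographic ranges of the suffixes of $P$ in $X$, and symmetrically of the suffixes of $P^{rev}$ in $Y$, in total time $O(m\log(mn/z))=O(m\log n)$; a single extraction of length $m$ per cut certifies each pair of endpoints and discards any fingerprint collisions.

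Next I would replace the wavelet tree on $G$ by the linear-space two-dimensional reporting structure of \cite{CLP11}, which reports $p$ points lying in any $w\times w$ query rectangle in $O((p+1)\log^\epsilon w)$ time for any constant $\epsilon>0$. Summed over the $m-1$ range queries and the $occ_p$ primary occurrences this contributes $O((m+occ_p)\log^\epsilon z)$, using $\log w=\Theta(\log z)$. Each reported point $(x,y)$ directly yields a primary occurrence at $S[T[y]-k..T[y]-k+m-1]$, so no further block-tree navigation is required at this stage.

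For the secondary occurrences I would discard the per-level $F_l$ and $\pi_l$ machinery in favour of the single global array $R[1..w]$ sketched in the preceding text: every source from every level, sorted by its starting position, with its length and the offset to its target kept aside; a range-maximum data structure \cite{FH11} over the endpoint values, answerable in $O(1)$ time; and a predecessor structure over the starting positions, answerable in $O(\log\log n)$ time. Given any occurrence position $S[i..i+m-1]$ (primary or secondary), one predecessor search locates the rightmost $R[r]$ whose source begins at or before $i$; a recursive RMQ traversal of $R[1..r]$ then enumerates every source that also reaches position $i+m-1$, each in $O(1)$ amortized time, while pruning subarrays whose running maximum endpoint is smaller than $i+m-1$. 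The target position of a covering source is obtained directly from its stored offset, so the resulting secondary occurrence is itself an index in $S$ and can be fed recursively into the same routine. Hence each reported occurrence costs $O(\log\log n)$ for its predecessor search plus $O(1)$ RMQ work, and combined with the $O(\log^\epsilon z)$ grid-locate charge the per-occurrence bound is $O(\log\log n+\log^\epsilon z)$.

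Adding the three contributions gives $O(m\log n + occ(\log\log n+\log^\epsilon z))$, which collapses to $O(m\log n+occ\log^\epsilon n)$ via $\log z\le\log n$; the extraction bound is inherited verbatim from the block tree of \cite{GNP17}. The main obstacle I anticipate is the correctness of the recursive RMQ enumeration of covering sources: one must formulate a clean invariant guaranteeing that every source covering $[i,i+m-1]$ is reported exactly once and that no recursion branch is pruned while still containing a covering source. Once that invariant is in place, the remainder is a routine accounting of subroutines already built for Theorem~1.
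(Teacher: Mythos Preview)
Your proposal is correct and follows essentially the same route as the paper: swap in the \cite{GNP17} block-tree variant for extraction and fingerprints, invoke their Lemma~9 for the $m-1$ ranges, replace the grid by the \cite{CLP11} structure, and handle secondary occurrences via the single global source array $R$ with a predecessor structure plus \cite{FH11} range-maxima, exactly as the paper does. One small nit: the per-cut verification extraction you add is unnecessary (Lemma~9 already certifies the ranges within its stated cost) and, taken literally, would contribute an extra $O(m^2/\log_\sigma n)$ term; simply drop it and the accounting matches the paper's $O(m\log n + occ(\log\log n+\log^\epsilon z))$ verbatim.
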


\section{Conclusions}

We have proposed a way to build a self-index on the Block Tree (BT)
\cite{BGGKOPT15} data structure, which we call BT-index. The BT obtains a 
compression related to the LZ77-parse of the string. If the parse uses $z$ 
non-overlapping phrases, then the BT uses $O(z\log(n/z))$ space, whereas an 
LZ77-compressor uses $O(z)$ space. Our BT-index, within the same asymptotic 
space of a BT, finds all the $occ$ occurrences of a pattern $P[1..m]$ in time 
$O(m\log n+occ\log^\epsilon n)$ for any constant $\epsilon>0$. 

The next step is to implement the BT-index, or a sensible simplification of it,
and determine how efficient it is compared to current implementations 
\cite{KN13,CNfi10,CNspire12,CFMN16}. As discussed in the Introduction, there 
are good reasons to be optimistic about the practical performance of this 
self-index, especially when searching for relatively long patterns.

\section*{Acknowledgements}

Many thanks to Simon Puglisi and an anonymous reviewer for pointing out several
fatal typos in the formulas, and to Travis Gagie for useful suggestions.

\bibliographystyle{splncs03}
\bibliography{paper}

\end{document}